\documentclass[journal, twocolumn]{IEEEtran}
\usepackage[english]{babel}
\usepackage{amsmath,amssymb,bm}
\usepackage{amsfonts}
\usepackage{graphicx}
\usepackage{subfigure}
\usepackage{epstopdf}
\usepackage{color}

\newcommand{\qb}{{\bf b}}

\newcommand{\qe}{{\bf e}}

\newcommand{\qh}{{\bf h}}

\newcommand{\qw}{{\bf w}}

\newcommand{\qC}{{\bf C}}

\newcommand{\qG}{{\bf G}}
\newcommand{\qH}{{\bf H}}
\newcommand{\qI}{{\bf I}}

\newcommand{\qW}{{\bf W}}

\newcommand{\qzero}{{\bf 0}}

\newcommand{\tr}{\mbox{trace}}
 
\newtheorem{theorem}{Theorem}
\newcommand{\be}{\begin{equation}} \newcommand{\ee}{\end{equation}}
\newcommand{\bea}{\begin{eqnarray}} \newcommand{\eea}{\end{eqnarray}}

 \def\endproof{\hspace*{\fill}~$\square$\par\endtrivlist\unskip}

\usepackage[paper=letterpaper,
            marginparwidth=0.0in,     
            marginparsep=.05in,       
            margin=0.8in,               
            includemp]{geometry}
\begin{document}

\title{Deep Learning Enabled Optimization of Downlink Beamforming Under Per-Antenna Power Constraints: Algorithms and Experimental Demonstration }%
\author{Juping Zhang,
        Wenchao Xia, ~\IEEEmembership{Student~Member,~IEEE,}
        Minglei You,
        Gan Zheng,~\IEEEmembership{Senior~Member,~IEEE,}
        Sangarapillai Lambotharan,~\IEEEmembership{Senior~Member,~IEEE,}
        and Kai-Kit Wong,~\IEEEmembership{Fellow,~IEEE}
     \thanks{J. Zhang,  G. Zheng,  and S. Lambotharan  are  with the Wolfson School of Mechanical, Electrical and Manufacturing Engineering, Loughborough University, Leicestershire, LE11 3TU, UK (Email: \{j.zhang3,g.zheng, s.lambotharan\}@lboro.ac.uk).}
     \thanks{M. You is  with Department of Engineering, Durham University, Durham, DH1 3LE, UK (Email: minglei.you@durham.ac.uk).}
  \thanks{W. Xia is with the Information Systems Technology and Design Pillar, Singapore University of Technology and Design, Singapore 487372. (E-mail: wenchao xia@sutd.edu.sg).}
  \thanks{K.-K. Wong is with the Department of Electronic and Electrical Engineering, University College London, London, WC1E 6BT, UK (Email: kai-kit.wong@ucl.ac.uk).}
  \thanks{We  gratefully acknowledge the support of NVIDIA Corporation with the donation of a  Titan Xp GPU used for this research.}
}
\maketitle

\begin{abstract}
This paper studies fast downlink beamforming algorithms using deep learning in  multiuser multiple-input-single-output  systems where each transmit antenna at the base station has its own power constraint. We focus on the signal-to-interference-plus-noise ratio (SINR) balancing problem which is quasi-convex but there is no efficient solution available. We first design a fast subgradient algorithm that can achieve near-optimal solution with reduced complexity. We then propose a deep neural network structure to learn the optimal beamforming based on convolutional networks and  exploitation of the duality of the original problem. Two strategies of learning various dual variables are investigated with different accuracies, and the corresponding recovery of the original  solution is facilitated by the subgradient algorithm. We also develop a generalization method of the proposed algorithms so that they can adapt to the varying number of   users and antennas without re-training.  We carry out intensive numerical simulations and testbed experiments to evaluate the performance of the proposed algorithms.
{Results show that the proposed algorithms achieve close to optimal solution in simulations with perfect channel  information  and outperform the alleged {{theoretically}} optimal solution in experiments, illustrating a  better performance-complexity tradeoff than existing schemes.}
 \end{abstract}

\begin{IEEEkeywords}
Deep learning, beamforming, MISO,  SINR balancing, per-antenna power constraints.
\end{IEEEkeywords}

\section{Introduction}
Multiuser multi-antenna techniques  (or multiple-input multiple-output, MIMO) techniques  can significantly improve the spectral and energy efficiency  of wireless communications   by exploiting the degree of freedom in the spatial domain. They have been widely adopted in modern wireless communications systems such as the fourth and the fifth-generation (4G and 5G) of cellular networks \cite{MUMIMO-4G}\cite{5G-2014}, the high efficiency wireless local area  (WiFi) networks standard 802.11ax \cite{MUMIMO-wifi}, and the latest  satellite digital video broadcasting  standard  DVB-S2X \cite{dvb-s2}. Among the multiuser MIMO techniques, beamforming is one of the most promising and practical schemes to mitigate multiuser interference and exploit the gain of MIMO antennas.

In the last two decades, the optimal beamforming strategies have been intensively studied for the multiple-input single-output (MISO) downlink where a base station with multiple antennas serves multiple single-antenna users. For instance, the problem of  signal-to-interference-plus-noise ratio  (SINR) balancing  or maximization of the minimum SINR of all users,  under a total power constraint was studied in  \cite{schubert2004solution,bjornson2014optimal},  the total BS transmit power minimization problem under quality of service (QoS) constraints was investigated in \cite{rashid1998transmit,wiesel2006linear,gershman2010convex,shi2016sinr},  and the sum rate maximization problem under the total   power constraint was tackled in \cite{bjornson2014optimal,Yoo2006on, christensen2008weighted,shi2011an}. The existing approaches mainly make use of the advances of convex optimization techniques such as second-order cone programming (SOCP) \cite{wiesel2006linear, gershman2010convex} and semidefinite programming (SDP)  \cite{luo2010semidefinite}, and the uplink-downlink duality which indicates that under the sum power constraint, the achievable SINR region and the normalized beamforming in the downlink are the same as those in the dual uplink channel.

 Early works   mostly focus on the optimal beamforming design under the sum power constraint  across all antennas of a transmitter. This constraint does not take into account the fact that each transmit antenna has its own power amplifier, and therefore its power is individually limited. The per-antenna power constraints were first systematically studied in \cite{yu2007transmitter} where a dual framework was proposed to minimize the maximum transmit power of each antenna under users' SINR constraints. This work has sparked much research interest in optimizing  beamforming under per-antenna power constraints. The work in \cite{ZFDPC-PA} studied the optimization of the nonlinear zero forcing (ZF) dirty paper coding based beamforming under per-antenna power constraints. Generic optimization of beamforming for  multibeam satellite systems was studied in \cite{Zheng-12} under general linear and nonlinear power constraints. The per-antenna constant envelope precoding for large multiuser MIMO systems was investigated in \cite{Larsson-13}. The transceiver designs for multi-antenna multi-hop cooperative communications under per-antenna power constraints were proposed in \cite{Xing-PA} and both linear  and nonlinear transceivers  were investigated.  The signal-to-leakage-plus-noise ratio (SLNR)  maximized precoding for the  downlink under per-antenna power constraints  was considered in \cite{Shen-16} where a semi-closed form optimal solution was proposed.
A general framework for covariance matrix optimization of MIMO systems under different types of   power constraints was proposed in \cite{xing-17}. More recently, the optimal MIMO precoding under the constraints of both the total   consumed power constraint and the individual radiated power constraints was studied in \cite{Larsson-19} and numerical algorithms were developed to maximize the mutual information.

The problem of interest in this paper is to efficiently maximize the minimum received SINR  or to balance SINR, in the multiuser MISO downlink under per-antenna power constraints at the BS. This problem, although being quasiconvex, is more challenging than the counterpart with the total power constraint and the problem of minimizing the per-antenna power in \cite{yu2007transmitter}, and until now   there does not exist   efficient algorithms.  Consequently,   existing beamforming techniques are unable to support real-time applications because the small-scale fading channel varies considerably  fast.
 {{For instance, in a WLAN 802.11n system operating at 2.4 GHz with a  pedestrian speed of $1.4$ m/s,  the coherence time is $89$ ms; and in a  Long-Term Evolution (LTE) downlink operating at 2.6 GHz with a residential area vehicle velocity of $10$ m/s,  the coherence time is only $11.5$ ms. Traditional time-consuming optimization routines will produce obsolete beamforming solution that is not timely for the current channel state and lead to significant performance degradation which will be demonstrated in our experiment.}}
 In \cite{Fettweis-08}, the dual problem  was   derived and the optimal solution  at much reduced computational cost was developed. However, it was found out that the best solution is obtained by a commercial nonlinear solver \cite{knitro}, which does not explore the structure of the problem and is still not efficient. Although there are simple heuristic beamforming solutions which have closed-form solutions such as  the ZF beamforming and the regularized ZF (RZF) beamforming, the reduced complexity often leads to performance loss. Even worse, the work in  \cite{ZF-PA} showed that the conventional ZF beamforming under per-antenna power constraints no longer admits  a simple pseudo-inverse form as the case under the total power constraint, and instead the optimal ZF beamforming requires solving an SOCP problem which has much higher complexity.

In this paper, we take a different approach and develop deep learning (DL) enabled  beamforming solutions  to dramatically improve the computational efficiency.
 Recently DL has been recognized as a promising solution for addressing various problems in several areas of wireless networks. This is because deep neural networks have the ability to model highly non-linear functions at considerably low complexity. One of the areas of interest is to deal with   scenarios in which the channel model does not exist, e.g., in underwater and molecular communications \cite{farsad2017detection} or is difficult to characterize analytically due to imperfections and nonlinearities  \cite{Shafin-review}. In these situations,
DL based detection has been proposed to  tackle the underlying unknown nonlinearities \cite{Mosleh-18}.  Another area of interest  is to optimize the end-to-end system performance \cite{oshea2017learning,Dorner2018deep}. Conventional communication systems are based on the modular design and each block (e.g., coding, modulation) is optimized independently, which can not guarantee  the optimal overall performance.  However, DL holds great promises for further improvement by considering  end-to-end performance optimization. The third area of interest is to overcome the complexity of wireless networks  \cite{Shafin-review} which is  the focus of our paper.
In this aspect, DL has found many exciting applications in wireless communications such as channel decoding \cite{liang2018an,kim2018communication}, MIMO detection \cite{he2018model,samuel2018learning}, channel estimation \cite{he2018deep,wen2018deep}.
 The current work belongs to the framework of learning to optimize in wireless resource allocation. {{The rationale is that the DL technique bypasses the complex optimization procedures, and  learns  the optimal mapping from the channel state to produce the beamforming solution directly by training a neural network. The result is that the trained neural network can be used as a function mapping to obtain the real-time beamforming solution  with channel state as input.}} As a result, the computational complexity is   transferred to   offline training phase\footnote{To the best of our knowledge, the computational complexity of the training phase is not well understood, due to the complex implementation of the backpropagation process and that it depends very much on the specific application regarding the required number of training examples for satisfactory generalization. That said, this is usually not a concern in most applications because training takes place offline given sufficient computational capability and retraining is only performed infrequently when the specific applications depart considerably from those training examples.}, and hence the complexity during the online transmission phase is greatly reduced.  The mostly successful applications of DL in this framework by far is power allocation \cite{sun2017learning,liang2018towards,lee2018deep,li2018intellignet,Emil-19}, in which the power vector is treated as the training output, while the channel gains are taken into the input of the DL network. In this case, the power variables only take positive values and the number of power variables is normally the number of users and therefore relatively small and easy to handle.

However,  there are few works that focus on the learning approach to optimize the beamforming design in multi-antenna  communications, with the exception of \cite{shi2018learning,Kerret2018robust,alkhateeb2018deep,huang2018unsupervised,Wenchao-ICC,Wenchao-2019}. The difficulty is partly due to the large number of complex variables contained in the beamforming matrix that need to be optimized.
An outage-based approach to transmit beamforming was studied in \cite{shi2018learning} to deal with the channel uncertainty at the BS, however, only a single user was considered. The work in \cite{Kerret2018robust}  designed a decentralized robust precoding scheme based on a deep neural network (DNN). {{The projection over a finite dimensional subspace in \cite{Kerret2018robust} reduced the difficulty, but also limited the performance. A DL model was used in \cite{alkhateeb2018deep} to predict the beamforming matrix directly from the signals received at the distributed BSs in millimeter wave systems.
    However, both  \cite{Kerret2018robust} and  \cite{alkhateeb2018deep}  predicted the beamforming matrix in the finite solution space at the cost of performance loss. The works in \cite{shi2018learning,huang2018unsupervised} directly estimated the beamforming matrix without exploiting the problem structure in which the number of variables to predict increases significantly as the numbers of transmit antennas  and users increase.  This will lead to high training complexity and low learning accuracy of the neural networks when the numbers of transmit antennas  and users are large. In our previous works \cite{Wenchao-ICC}\cite{Wenchao-2019}, we proposed a beamforming neural network to optimize the beamforming vectors, but it is restricted to the total power constraint. We notice that none of existing works addressed the SINR balancing problem under the practical per-antenna power constraints, for which DL solution becomes even more attractive.

In this paper, we propose a DL enabled beamforming optimization approach for SINR balancing to provide an improved performance-complexity tradeoff under per-antenna power constraints. Inspired by the model driven learning philosophy \cite{he2018model2}, we propose to first learn the dual variables with reduced dimension rather than the original large beamforming matrix and then  recover the beamforming solution from the learned dual solution, by exploiting the structure or model of the beamforming optimization problem. Our main contributions are summarized as follows:
\begin{itemize}
  \item A subgradient algorithm is first proposed which not only demonstrates faster convergence than the best known algorithm in \cite{Fettweis-08}, but also facilitates the development of the DL solutions.
  \item A general DL structure to learn the dual variables is proposed, and two learning strategies are proposed to achieve the performance-complexity tradeoff.
      A heuristic method is developed to facilitate the generalization of the proposed DL algorithms by augmenting the training set  so that they can adapt to the varying number of active  users and antennas without re-training.
  \item Both software simulations and testbed experiments using software defined radio (SDR) are carried out to validate the performance of the proposed algorithms. To the best of our knowledge, this is the first testbed demonstration of deep learning enabled multiuser beamforming.

\end{itemize}

The remainder of this paper is organized as follows. Section \ref{section system model} introduces the system model and formulates the SINR balancing problem and its dual formulation.  Section \ref{subgradient} proposes the subgradient algorithm.   Section \ref{DL} provides the general structure framework for the beamforming optimization based on learning the dual variables and the recovery algorithms. Numerical and experimental results are presented in Section \ref{Performance}. Finally, conclusion is drawn in Section \ref{conclusions}.

\textbf{Notations:} The notations are given as follows. Matrices and vectors are denoted by bold capital and lowercase symbols, respectively. $(\cdot)^T$, $(\cdot)^*$, $(\cdot)^\dag$ and $(\cdot)^{-1}$  stand for transpose, conjugate, conjugate transpose and inverse/pseudo inverse (when applicable) operations of a matrix, respectively.
  $\mathbf{A}\succ \qzero$ indicates that the matrix $\mathbf{A}$ is positive definite. The operator $\text{diag}(\mathbf{a})$ denotes the operation to diagonalize the vector $\mathbf{a}$ into a matrix whose main diagonal elements are from $\mathbf{a}$.   Finally, $\mathbf{a}\sim\mathcal{CN}(\mathbf{0},\bm{\Sigma})$ represents a complex Gaussian vector with zero-mean and covariance matrix $\bm{\Sigma}$. ${\mathbb Z}$ denotes the non-negative field.

\section{System Model and Problem Formulation}\label{section system model} 
 Consider an MISO downlink channel where an $N_t$-antenna BS
 transmits   signals  to $K$ single-antenna users. For the user $k$, its channel vector,
 beamforming vector,  and data symbol   are
 denoted as $\qh_k^T, \qw_k$,    $s_k$, respectively, where ${\tt} E(|s_k|^2)=1$.
The additive white Gaussian noise (AWGN) at the received is denoted as $n_k\sim \mathcal{CN}(0, N_0)$.
All wireless links exhibit independent frequency non-selective
Rayleigh block  fading.     The received signal at user $k$ is
 \bea\label{eqn:rs}
    y_k &=& {\qh_k^T} \sum_{i=1}^K \qw_i s_i + n_k.
 \eea
 The SINR at the receiver of user $k$ is given by
\begin{equation}\label{eqn:SINR}
{\gamma}_k= \frac{ |\qh_{k}^T\qw_k|^2}{\sum\limits_{i=1,i \ne
k}^K |\qh_{k}^T\qw_i|^2 + N_0}.
\end{equation}

The beamforming matrix is collected in $\qW=[\qw_1, \qw_2, \cdots, \qw_K] \in {\mathbb C}^{N_t\times K}$. Then the per-antenna power   at antenna $n$ can be expressed as
\be
    p_n= \|\mathbf{W}(n,:)\|^2 = \|\qe_n^T \qW\|^2,
\ee
where $\qe_n$ is a zero vector except its $n-$th element being 1.

The problem  of interest is to maximize the minimum user SINR, i.e., SINR balancing, under per-antenna power constraints $\{P_n\}$. Mathematically, it can be formulated as follows:
\bea
    \textbf{P1:}&& \max_{\qW, \Gamma} ~~\Gamma \notag\\
    \mbox{s.t.}  && {\gamma}_k= \frac{ |\qh_{k}^T\qw_k|^2}{\sum\limits_{i=1,i \ne
k}^K |\qh_{k}^T\qw_i|^2 + N_0} \ge \Gamma, \forall k, \label{P1:SINR}\\
     &&p_n= \|\qe_n^T \qW\|^2 \le P_n, ~~~\forall n.\label{P1:PA}
\eea

The SINR balancing problem is in general quasi-convex, so it can be solved via methods such as bisection search and generalized eigenvalue programming \cite{wiesel2006linear}\cite{Fettweis-08}. However, these methods suffer from high  complexity and computational delay, and are not practical for real-time data transmissions.

In \cite{Fettweis-08}, a useful dual formulation of \textbf{P1}  is derived as
\bea
     \textbf{P2:}&& \max_{\beta, \bm \lambda, \bm \mu} ~~ \beta \notag\\
    \mbox{s.t.} && \beta \lambda_k \qh_k^T \qG(\bm\lambda, \bm \mu)^{-1}\qh_k^*\le 1, \forall k,\\
                && \sum_{k=1}^K \lambda_k N_0 = 1,\notag\\
                && \sum_{n=1}^{N_t} \mu_n P_n=1,   \notag\\
                && \bm \lambda, \bm \mu, \beta\ge \qzero.
\eea
 where $\qG(\bm\lambda, \bm \mu) \triangleq \sum_{i=1}^K \lambda_i \qh_i^*\qh_i^T +\mbox{Diag}(\bm \mu)$,   $\bm \lambda \in {\mathbb Z^{K}}, \bm \mu\in {\mathbb Z^{N_t}}$ are dual variables associated with the SINR constraint \eqref{P1:SINR} and the per-antenna power constraint \eqref{P1:PA} in \textbf{P1}, and $\beta$ is related to the minimum  SINR  $\Gamma$ in \textbf{P1} by the relation $\beta = 1 + \frac{1}{\Gamma}$. {{For the solution of \textbf{P2}, it is assumed that  $\qG(\bm\lambda, \bm \mu)\succ\qzero$.}}

Although the problem \textbf{P2} is still a quasi-convex problem, compared to the original problem  \textbf{P1}, it can be more efficiently solved  because it only involves $K+N_t+1$ non-negative variables while \textbf{P1} needs to optimize $2KN_t$ real variables. The problem \textbf{P2} can be solved using standard nonlinear solvers such as  Matlab's built-in function `fmincon'. Currently the fastest optimal solution is known to be achieved by Ziena's nonlinear solver Knitro \cite{knitro}, which is compared and shown in \cite{Fettweis-08}. However, the general solvers do not exploit the special analytical properties of the problem  \textbf{P2}, so they are not efficient. In addition, it is not known how to recover the optimal solution to the beamforming matrix $\qW^*$ once \textbf{P2} is solved. These issues will be studied in the next section.


\section{A Subgradient Algorithm to Solve \textbf{P2}}  \label{subgradient} 

In this section, we   derive a fast subgradient algorithm to solve \textbf{P2}, based on the downlink-uplink duality results derived in \cite{yu2007transmitter}. According to \cite[Theorem 1]{yu2007transmitter}, the problem \textbf{P2} can be equivalently written as the following max-max problem:
\bea
     \textbf{P3:}&& \max_{\bm \mu} \max_{\Gamma, \bm \lambda} ~~ \Gamma \notag\\
    \mbox{s.t.} &&{{\max_{\qw_k}}}\frac{\lambda_k |\bar\qw_k^\dag\qh_k^*|^2}{ \sum_{i=1,i\ne k}^K  {{\lambda_i}}|\bar\qw_k^\dag\qh_i^*|^2 + \bar\qw_k^\dag(\mbox{Diag}(\bm \mu)) \bar\qw_k}\ge \Gamma, \forall k, \notag \\ 
                && \sum_{k=1}^K \lambda_k = \frac{1}{N_0},\notag\\
                && \sum_{n=1}^{N_t} \mu_k P_n=1,\notag\\
                && \bm \lambda, \bm \mu, \Gamma\ge \qzero.
\eea
\textbf{P3} can be interpreted as the maximization of the minimum user SINR in the virtual uplink in which $K$ single-antenna users transmit signals to the BS with the total power constraint $\frac{1}{N_0}$. The uncertain covariance matrix of the received noise vector is characterized by $\mbox{Diag}(\bm \mu)$.  {{The normalized  receive beamforming at the BS for user $k$ is denoted by $\bar \qw_k=\frac{\qw_k}{\|\qw_k\|}$}}  which has the same direction as the downlink transmit beamforming, while $\lambda_k$ denotes the uplink transmit power of user $k$. Because the covariance matrix of the received noise vector $\mbox{Diag}(\bm \mu)$ is also a variable, \textbf{P3} is still difficult to solve. To tackle this problem, we first keep the  variable $\bm \mu$ fixed, and then reach the sub-problem below:
\bea
     \textbf{P4:}&&  f(\bm\mu)=\max_{\Gamma, \bm \lambda} ~~ \Gamma \notag\\
    \mbox{s.t.} &&{{\max_{\qw_k}}}\frac{\lambda_k |\bar\qw_k^\dag\qh_k^*|^2}{ \sum_{i=1,i\ne k}^K  {{\lambda_i}} |\bar\qw_k^\dag\qh_i^*|^2 + \bar\qw_k^\dag(\mbox{Diag}(\bm \mu)) \bar\qw_k}\ge \Gamma, \forall k, \label{eqn:P41} \\ 
                && \sum_{k=1}^K \lambda_k N_0 = 1,\label{eqn:P42}\\
                && \bm \lambda,   \Gamma\ge \qzero.
\eea
  \textbf{P4} can be interpreted as the  nonlinear SINR balancing problem with a total power constraint and colored noise with covariance matrix $\mbox{Diag}(\bm \mu)$. In the following, we propose an efficient fixed-point iteration in Algorithm 1 below  to solve \textbf{P4}. 

\textbf{\underline{Algorithm 1 to Solve \textbf{P4}:}}
\begin{enumerate}
    \item Initialize $\bm \lambda$ that satisfies $\sum_{k=1}^K  \lambda_k N_0=1$. Suppose $j$ is the iteration index, and the achievable SINR in the uplink is $\gamma^{(j)}$. Repeat the following   steps 2)-5) until convergence.
    \item For each $k$, define $\qG_k(\bm\lambda, \bm \mu) \triangleq \sum_{i=1, i\ne k}^K \lambda_i \qh_i^*\qh_i^T +\mbox{Diag}(\bm \mu)$.

    \item Solve an auxiliary variable $\bar \lambda_k$ as
        \be
            \bar \lambda_k = \mathcal{I}_k (\bm\lambda^{(j-1)} )\triangleq  \gamma^{(j)} \frac{1}{\qh_k^T \qG_k(\bm\lambda, \bm \mu)^{-1}\qh_k^*}, \forall k.
        \ee
    \item Normalize $\{\bar \lambda_k \}$ to obtain $\{\lambda_k\}$ as:
        \be
            \lambda_k =  \bar \lambda_k \eta, \mbox{where}~~ \eta = \frac{1}{\sum_{i=1}^K \bar \lambda_i N_0}.
        \ee
        \item
          Calculate $\beta_k = \frac{1} {\lambda_k \qh_k^T \qG(\bm\lambda, \bm \mu)^{-1}\qh_k^*}$. Then update the achievable SINR in the uplink as
          \be
            \gamma^{(j)} = \min_k  \frac{1}{\beta_k-1}.
          \ee
 \end{enumerate}
 It can be proved that Algorithm 1 converges to the optimal solution of \textbf{P4}. The proof  is similar to \cite[Theorem 11.1]{proof} and a refined version is provided in Appendix A for completeness. 

 The optimal uplink beamforming for a given $\bm\mu$ can be derived according to the minimum mean square error (MMSE) criterion:
 \be
    \bar \qw_k = \frac{\qG_k(\bm\lambda, \bm \mu)^{-1}\qh_k^*}{\|\qG_k(\bm\lambda, \bm \mu)^{-1}\qh_k^*\|}, \forall k.
 \ee

 With the inner maximization problem \textbf{P4} solved for given $\bm \mu$, we can obtain the objective function value $f(\bm\mu)$. Next we solve the outer maximization of $\bm \mu$ using a subgradient projection algorithm, where the subgradient can be found using the downlink beamforming obtained from the normalized uplink beamforming.

 {{As proved in Appendix B,  $f(\bm\mu)$ is a concave function  in  $\bm \mu$.}} A subgradient of $\mu_n$ can be expressed as $\|\qe_n^T \qW\|^2$ because $\mu_n$ is the dual variable associated with the $n$-th  antenna power constraint.
 The proof is omitted.  Based on this result, we propose the following subgradient based algorithm \cite{yu2007transmitter}\cite{subgradient} to solve  \textbf{P3}.

\textbf{\underline{Algorithm 2 to Solve \textbf{P3}:}}
 \begin{enumerate}
    \item Initialize $\bm\mu$. Suppose $j$ is the iteration index.  Repeat the following  steps 2)-7) until convergence.

    \item Given $\bm\mu$, call Algorithm 1 to find the optimal $\bar{\bm\lambda}$.

    \item
          Calculate $\beta_k = \frac{1} {\lambda_k \qh_k^T \qG(\bar{\bm\lambda}, \bm \mu)^{-1}\qh_k^*}$. Then update the achievable SINR in the uplink as
          \be
            \gamma^{(j)} = \min_k  \frac{1}{\beta_k-1}.
          \ee

    \item Find the optimal normalized uplink beamforming
           \be
           \bar \qw_k = \frac{\qG_k(\bar{\bm\lambda}, \bm \mu)^{-1}\qh_k^*}{\|\qG_k(\bar{\bm\lambda}, \bm \mu)^{-1}\qh_k^*\|}, \forall k.
         \ee

    \item Find the downlink power $\{p_k\}$ to achieve the SINR $\gamma^{(j)}$, i.e., to solve the following linear equation set:
    \be
            \frac{ p_k |\qh_{k}^T\bar\qw_k|^2}{\sum\limits_{i=1,i \ne  k}^K p_i |\qh_{k}^T\bar\qw_i|^2 + N_0} = \gamma^{(j)}, k=1,\cdots, K.
    \ee

    \item Update the downlink beamforming vector as $\qw_k = \sqrt{p_k} \bar\qw_k$ and $\qW=[\qw_1, \cdots, \qw_K]$.

    \item Update $\bm\mu$ using the subgradient Euclidean projection method with step size $\alpha_j$:
      \be
            \bm\mu^{(j+1)} = \mathcal{P}_{\mathcal{S}}\{ \bm\mu^{(j)}_k + \alpha_j \mbox{Diag}\{\|\qe_n^T \qW\|^2\}\},
      \ee
      where $\mathcal{S}=\{\bm \mu | \sum_{n=1}^{N_t} \mu_n P_n=1\}$.

      This projection $\mathcal{P}_{\mathcal{S}}$ can be solved efficiently using the bisection search. The detailed projection algorithm is provided in Algorithm 3 of Appendix C.

      \item Regulate the downlink beamforming.
            Update the beamforming vector as follows to satisfy all per-antenna power constraints:
            \be
                \qW(n,:) =   {\qW(n,:)} \sqrt{ \min_n \frac{P_n}{ \|\qe_n^T \qW\|^2}}, \forall n.
            \ee

 \end{enumerate}
\begin{figure}[h]
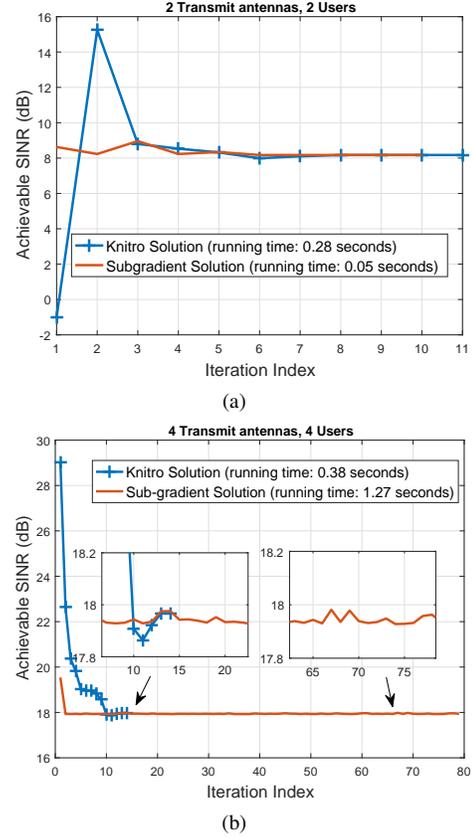

  \centering
  \subfigure[]{
    \includegraphics[width=0.35\textwidth]{./fig/Fig_subgradient_good_2x2}}
      \subfigure[]{
    \includegraphics[width=0.35\textwidth]{./fig/Fig_subgradient_bad_4x4_v4}}
  \caption{Comparison of convergence behaviours of the sub-gradient algorithm (Algorithm 2) and the optimal solution using Knitro for two channel instances. The per-antenna power constraint is 10 dB.  (a)$N_t=K=2$ and the sub-gradient algorithm shows faster convergence; (b) $N_t=K=4$ and the sub-gradient algorithm experiences slower convergence and    converges only to the neighbourhood of the optimal solution. }
\label{Fig:convergence}
\end{figure}
{\it Remarks:} The subgradient algorithm exploits the structure of the original problem \textbf{P1}, so it is more efficient than a general nonlinear solver. However, the step size $\alpha_j$ is a critical parameter. We find that $\alpha_j= 0.01 \times 2^{-j} $ gives a satisfactory performance. We observe that in general   Algorithm 2 can solve \textbf{P3} faster than the available numerical solver such as Knitro and achieve close to optimal performance, which can be seen in Fig. \ref{Fig:convergence}(a) and will be verified using simulation results in Section V. However, there is no guarantee that a subgradient algorithm  converges to the exact optimal solution. It  may only converge to the neighbourhood  of the optimal solution, and its convergence may be slow, as seen in Fig. \ref{Fig:convergence}(b). In addition, the subgradient algorithm may not guarantee that the per-antenna power constraints will be satisfied, and that is why Step 8) of Algorithm 2 is necessary to regulate the per-antenna power.

Another important implication of the development of Algorithm 2 is that it provides an efficient way to recover the primal variable, i.e., the downlink beamforming vectors, given various dual variables either $\bm\mu$ and $\bm\lambda$, or only $\bm\mu$. The details will be given in the next section.

\section{The Proposed Deep Learning Structure and Strategies}\label{DL}

In this section, we develop DL based solutions of \textbf{P1} that can achieve better performance-efficiency tradeoff than the currently  available solutions. Instead of learning to optimize the original beamforming matrix $\qW$ directly, we will learn the optimization of the dual variables in \textbf{P2}. This will dramatically reduce the number of variables that need to be learned. In the sequel, we will first introduce a general DL structure that takes the channel {$\qh=[\qh_1^T, \cdots, \qh_k^T]^T$} as the input, and the output is the dual variable(s) in \textbf{P2}. We will also devise a generalized learning solution such that the proposed DL structure can deal with varying number of users and antennas and transmit power without re-training. We will then propose two learning strategies, i.e., one is to learn the dual variables $\bm\mu$ and $\bm\lambda$ with fast recovery of the original beamforming solution, and the other is  to learn only the dual variable $\bm\mu$ with improved learning accuracy, to achieve various tradeoffs.

\subsection{A General DL Structure}
We first show the existence of a neural network that can approximate the solution of  the optimization problem \textbf{P2}. To this end, we define $\bm{\mu}^{opt}$ and $\bm{\lambda}^{opt}$ as two tensors with the optimized dual variables $\bm{\mu}$ and $\bm{\lambda}$, respectively.  The neural network aims to learn the continuous mapping
\begin{equation}\label{mapping function}
\mathcal{F}(\qh,\bm{\mu}^{0})=\{\bm{\mu}^{opt}, \bm{\lambda}^{opt}\},
\end{equation}
where $\bm{\mu}^{0}$ is the initialization set of dual variables and $\mathcal{F}(\cdot,\cdot)$ denotes the continuous mapping process in Algorithm 2 to achieve the stationary point from the input set of channel coefficients together with the initialization set of dual variables.
The following theorem will prove the existence of a feedforward network which imitates the continuous mapping in  \eqref{mapping function}.
\begin{theorem}
For any given accuracy  $\varepsilon>0$, there exists a positive constant $L$ large enough such that a feedforward neural network with $L$ layers can produce similar performance to the mapping process in \eqref{mapping function}, i.e.,
\begin{equation}
\text{sup}_{\qh,\bm{\psi}}||\text{NET}_{L}(\qh,\bm{\psi})-\mathcal{F}(\qh,\bm{\mu}^{0})||_F\leq\varepsilon,
\end{equation}
where $\bm{\psi}$ is the set of the neural network parameters including weights and biases.
\end{theorem}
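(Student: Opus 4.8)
The plan is to invoke a universal approximation theorem for feedforward networks, but the crux is establishing that the target object $\mathcal{F}(\qh,\bm{\mu}^{0})$ is a \emph{continuous} function of its arguments on a \emph{compact} domain, because those are precisely the hypotheses under which classical results (e.g.\ Hornik--Stinchcombe--White, or Cybenko) guarantee a single-hidden-layer or multilayer network approximating $\mathcal{F}$ uniformly to within $\varepsilon$. So the proof naturally splits into three steps: (i) argue the input domain can be taken compact; (ii) argue $\mathcal{F}$ is continuous on that domain; (iii) quote the approximation theorem and conclude.

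For step (i), I would note that the channel entries $\qh$ are drawn from a distribution with effectively bounded support (or can be truncated to a large ball with negligible probability loss), and the initialization $\bm{\mu}^{0}$ lives in the simplex-like feasible set $\mathcal{S}=\{\bm\mu\mid\sum_n\mu_nP_n=1,\ \bm\mu\ge\qzero\}$, which is compact; also one restricts to channels for which $\qG(\bm\lambda,\bm\mu)\succ\qzero$, and on a compact set bounded away from degeneracy the relevant matrix inverses are well-behaved. Hence $\mathcal{F}$ is defined on a compact set $\mathcal{D}$.

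For step (ii), the continuity of $\mathcal{F}$ is where the real work lies. Here I would lean on the structure of Algorithm 2: each iteration is a composition of elementary continuous maps --- matrix inversion (continuous on the open cone of positive-definite matrices, which our compactness assumption keeps us inside), the fixed-point iteration $\mathcal{I}_k$ of Algorithm 1, the MMSE beamformer normalization, solving the linear power-balancing system (continuous by Cramer's rule since the system matrix is invertible at the optimum), the Euclidean projection $\mathcal{P}_{\mathcal{S}}$ (projection onto a fixed convex set is nonexpansive, hence continuous), and the final scaling. Since finite compositions of continuous maps are continuous, each finite truncation of Algorithm 2 is continuous in $(\qh,\bm\mu^{0})$; and because Algorithm 1 (by the convergence result cited to \cite[Theorem 11.1]{proof}) and the outer subgradient loop converge to the stationary point, the limit map $\mathcal{F}$ inherits continuity --- one would appeal to uniqueness of the optimal dual point (the problem \textbf{P2} being quasi-convex with the optimal SINR value unique) plus a uniform-convergence or Berge-type maximum-theorem argument to pass continuity through the limit. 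I expect this passage-to-the-limit, i.e.\ showing the iteration limit is continuous rather than merely each finite stage, to be the main obstacle, and it is the step most likely to be handled somewhat informally in the paper.

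Finally, for step (iii), with $\mathcal{F}\colon\mathcal{D}\to\mathbb{R}^{K+N_t}$ continuous on the compact set $\mathcal{D}$, I would cite the universal approximation theorem: for any $\varepsilon>0$ there is a feedforward network $\text{NET}_{L}$ with a suitably large number of layers $L$ (and enough neurons) and parameters $\bm\psi$ such that $\sup_{(\qh,\bm\psi)\in\mathcal{D}}\|\text{NET}_{L}(\qh,\bm\psi)-\mathcal{F}(\qh,\bm\mu^{0})\|_F\le\varepsilon$, which is exactly the claimed bound. I would remark that although the classical statement is for shallow networks, the depth-$L$ version follows immediately since a deep network can represent any shallow one, and that using depth is what makes $L$ (rather than width) the natural parameter to grow; this matches the theorem's phrasing "$L$ large enough."
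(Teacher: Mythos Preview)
Your proposal is correct and follows the same essential approach as the paper: invoke the universal approximation theorem of Hornik--Stinchcombe--White on the continuous mapping defined by Algorithm~2. The paper's own proof is in fact a single sentence to this effect, so your three-step elaboration (compact domain, continuity of the iterative map, then cite \cite{hornik1989multilayer}) is a strictly more detailed version of the same argument rather than a different route.
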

\begin{proof}
 The result in Theorem 1 can be obtained directly  by applying the universal approximation theorem in \cite{hornik1989multilayer} to the continuous mapping in Algorithm 2.
\end{proof}

Based on results in Theorem 1, next we find solutions through designing the neural networks with the  DL technique.
Similar to our previous work \cite{Wenchao-2019}, we introduce a general DL structure to approximate the mapping function from the channel coefficients to the beamforming solutions, as shown in Fig. \ref{BNN framework}. In addition to the conventional neural network module, the adopted DL structure also introduces a signal processing module based on expert knowledge for beamforming recovery from the key features, such as the dual variables $\bm{\lambda}$ and $\bm{\mu}$ in problem \textbf{P2}. Predicting the beamforming matrix directly may lead to high complexity since the number of the variables in the beamforming matrix depends on both the  number of users $K$ and the number of BS antennas $N_t$. Thus instead of predicting the beamforming matrix directly, we predict some key features (i.e., the dual variables $\bm\mu$ and $\bm\lambda$) whose variables are much less than those  in the beamforming matrix. Then these key features are used to recover the beamforming matrix  in the signal processing module.

 The adopted DL structure takes the convolutional neural network (CNN) architecture as the backbone because the parameter sharing adopted in the CNN can reduce the number of the learned parameters when compared to a fully-connected DNN. Moreover, CNN is well known to be effective for extracting features, which will benefit the generation of the beamforming solution using the channel features.
 The adopted DL structure includes two main modules: the neural network module and the signal processing module \cite{zhang2018hybrid}. Here we give a short description about the two modules, and for more details readers are referred  to \cite{Wenchao-2019}.

\begin{figure}[h]
\centering
\includegraphics[width=0.55\textwidth]{./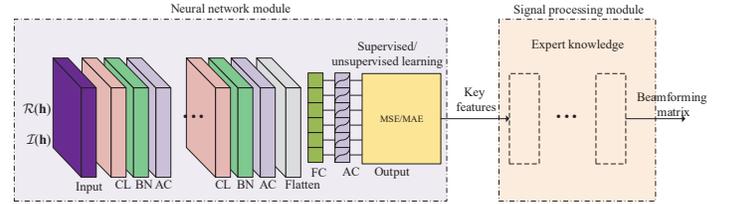}
\caption{A DL-based learning structure for the optimization of downlink beamforming, which includes two main modules: the neural network module and the signal processing module. The neural network module consists of  convolutional (CL) layers, batch normalization (BN) layers, activation (AC) layers, a fully-connected (FC) layer, and so on.  However,  the {{functionalities}} in the signal processing module, as well as the key features input,  are abstract, which are  specified by the expert knowledge.}
\label{BNN framework}
\end{figure}
\subsubsection{Neural Network Module}
The neural network module is a data-driven approach to approximate the mapping function from the complex channels to the key features.  In addition to the input and output layers, the neural network module also includes convolutional (CL) layers, batch normalization (BN) layers, activation (AC) layers, a flatten layer, and a fully-connected (FC) layer.  The input of the neural network module is the complex channel coefficients, which are not supported by the current neural network software. To address this issue, we separate the complex channel vector $\qh=[\qh_1^T,\cdots,\qh_K^T]^T\in\mathbb{C}^{NK\times 1}$ into two components $\mathfrak{R}(\qh)$ and $\mathfrak{I}(\qh)$ {{and form the new input $[\mathfrak{R}(\qh), \mathfrak{I}(\qh)]^T\in \mathbb{R}^{2\times N_tK}$}}, where $\mathfrak{R}(\qh)$ and $\mathfrak{I}(\qh)$ contain the real and imaginary parts of each element in $\qh$, respectively.  Each CL layer consists of many filters which apply convolution operations to the layer input, capture  special patterns and pass  the result to the next layer. The parameters of the filters are shared among different channel coefficients. The main function of the  BN layers is to normalize the output of the CL layers by two trainable parameters, i.e.,  a ``mean'' parameter and a ``standard deviation'' parameter. Besides, the BN layers can reduce the probability of over-fitting and enable a higher learning rate.

AC layers help neural networks extract the useful information and suppress the insignificant points of the input data. The rectified linear unit (ReLU) and sigmoid functions are suitable choices for the last AC layer, since the predicted variables are continuous and positive   numbers. The function of the flatten layer is to change the shape of its input into a vector for the FC layer to interpret. In addition to these functional layers, the loss function, marked `MSE/MAE' on the output layer in Fig. \ref{BNN framework}, is also very important in the introduced DL structure. The mean absolute error (MAE) or  the mean square error (MSE) is used  in the loss function  to update   parameters. The loss function together with the learning rate  determines how to update the  parameters of the neural network module.

\subsubsection{Signal Processing Module}
The neural network module offers universality in learning the key features from data, while
the signal processing module  aims to recover the beamforming matrix from the predicted key features at the output layer.
Different from the neural network module whose model is unknown, the signal processing module utilizes the (partially) known models of the data to recover the beamforming matrix. The learned key features and the {{functionalities}} in the signal processing module are designated according to the expert knowledge.  Note that the expert knowledge is problem-dependent and has no unified form, but what is in common is that the expert knowledge can significantly reduce the number of variables to be predicted compared to the beamforming matrix \cite{Wenchao-2019}. For example, the dual forms of the original problems are the typical expert knowledge for beamforming optimization.  The details of the signal processing module used to recover the beamforming matrix is provided in the next two subsections.

\subsection{To Learn $\bm\lambda$ and $\bm\mu$ and the Recovery Algorithm}
 With the above proposed general   DL structure, we need to decide which features of dual optimization variables in  \textbf{P2} will be learned, and what signal processing function is needed to recover the beamforming matrix. The first option is to learn both  $\bm\lambda$ and $\bm\mu$, so the output has  $K+N_t$ variables. Once they are learned, the following algorithm with steps taken from Algorithm 2 can be used to find a feasible beamforming solution that satisfies the per-antenna power constraints.

\textbf{\underline{Algorithm 4: To recover $\qW$ from $\bm\lambda$ and $\bm\mu$}}
 \begin{enumerate}
    \item Given the learned solution of $\bm\lambda$ and $\bm\mu$,
          Calculate $\beta_k = \frac{1} {\lambda_k \qh_k^T \qG(\bm\lambda, \bm \mu)^{-1}\qh_k^*}$. Then update the achievable SINR in the uplink as
          \be
            \gamma = \min_k  \frac{1}{\beta_k-1}.
          \ee

    \item Find the optimal normalized uplink beamforming as
           \be
            \bar \qw_k = \frac{\qG_k(\bm\lambda, \bm \mu)^{-1}\qh_k^*}{\|\qG_k(\bm\lambda, \bm \mu)^{-1}\qh_k^*\|}, \forall k.
         \ee

    \item Find the downlink power $\{p_k\}$ to achieve the SINR $\gamma$, i.e., to solve the following linear equation set:
    \be
            \frac{ p_k |\qh_{k}^T\bar\qw_k|^2}{\sum\limits_{i=1,i \ne  k}^K p_i |\qh_{k}^T\bar\qw_i|^2 + N_0} = \gamma^{*}, k=1,\cdots, K.
    \ee

    \item Update the downlink beamforming vector as $\qw_k = \sqrt{p_k} \bar\qw_k$ and $\qW=[\qw_1, \cdots, \qw_K]$.

         \item Regulate the downlink beamforming. Update the beamforming vector as follows to satisfy all per-antenna power constraints:
            \be
                {\qW(n,:)} =  {\qW(n,:)} \sqrt{ \min_n \frac{P_n}{ \|\qe_n^T \qW\|^2}}, \forall n.
            \ee

 \end{enumerate}

\subsection{To Learn  $\bm\mu$ Only and the Recovery Algorithm}
 The above learning strategy is straightforward and fast if the learning result is satisfactory, however, the learning accuracy can be much improved if the number of variables is reduced. This motivates us to use the proposed DL structure to learn only the dual variable $\bm\mu$ with output size of $N_t$, which contains $K$ less variables than the above approach that learns both  $\bm\lambda$ and $\bm\mu$. The idea of this approach is that given $\bm\mu$, the optimal $\bm\lambda$ can be efficiently optimized using Algorithm 2, which is more accurate than the   learning approach above. An additional advantage is that the output size does not depend on the number of users, so it can more easily adapt to the varying number of users. Once $\bm\mu$ is learned, the following algorithm with steps taken from Algorithm 2 can be used to derive a feasible beamforming solution to the original problem \textbf{P1}.

 \textbf{\underline{Algorithm 5: To recover $\qW$ from $\bm\mu$}}
 \begin{enumerate}
    \item Given the learned solution $\bm\mu$, call  Algorithm 1  to find the optimal $\bar{\bm\lambda}$.

    \item
          Calculate $\beta_k = \frac{1} {\lambda_k \qh_k^T \qG(\bar{\bm\lambda}, \bm \mu)^{-1}\qh_k^*}$. Then update the achievable SINR in the uplink as
          \be
            \gamma = \min_k  \frac{1}{\beta_k-1}.
          \ee

    \item Find the optimal normalized uplink beamforming as
           \be
            \bar \qw_k = \frac{\qG_k(\bar{\bm\lambda}, \bm \mu)^{-1}\qh_k^*}{\|\qG_k(\bar{\bm\lambda}, \bm \mu)^{-1}\qh_k^*\|}, \forall k.
         \ee

    \item Find the downlink power $\{p_k\}$ to achieve the SINR $\gamma$, i.e., to solve the following linear equation set:
    \be
            \frac{ p_k |\qh_{k}^T\bar\qw_k|^2}{\sum\limits_{i=1,i \ne  k}^K p_i |\qh_{k}^T\bar\qw_i|^2 + N_0} = \gamma^{(i)}, k=1,\cdots, K.
    \ee

    \item Update the downlink beamforming vector as $\qw_k = \sqrt{p_k} \bar\qw_k$ and $\qW=[\qw_1, \cdots, \qw_K]$.

      \item Regulate the downlink beamforming. Update the beamforming vector as follows to satisfy all per-antenna power constraints:
            \be
                {\qW(n,:)} =  {\qW(n,:)} \sqrt{ \min_n \frac{P_n}{ \|\qe_n^T \qW\|^2}}, \forall n.
            \ee

 \end{enumerate}

\subsection{Generalization of the Proposed DL Structure}

In this section, we will generalize the proposed universal DL so that it can adapt to the change of the number of users and antennas.  Although the above DL approaches can achieve satisfactory performance  for beamforming design, applying the DL approaches to practical applications faces the difficulties caused by the dynamic  wireless networks. In other words, when the number of transmit antennas $N_t$ or the number of users $K$ changes, a new model should be trained for prediction. This fact suggests that the applicability of the DL approaches is limited. Transfer learning and training set augmentation are effective ways to improve the generalization. The former transfers an existing model to a new scenario with some additional training and labelling effort \cite{shen2018transfer}, whereas the latter aims to train a large-scale model which adapts to different $N_t$ and $K$ by adding more samples into the training set, so that the training set can cover more possible scenarios.  In this work, we adopt the latter method for simplicity. Without losing generality, we take the DL approach to learning $\bm{\mu}$ only as an example and give more details about the training set augmentation method.

In the  training set augmentation method, we aim to train a large-scale model with $2N^{\prime}_tK^{\prime}$-input and $N_t^{\prime}$-output. In order to make the large-scale model adaptable to different $N_t$ and $K$ values, we generate an augmented training set. Different from the training set whose   samples have the same  $N_t$ and $K$ values, the samples in the augmented training set are diverse, i.e., the numbers of the transmit antennas and the numbers of users in different samples could vary. However, the size of each sample is fixed as $2N^{\prime}_tK^{\prime}$-input and $N_t^{\prime}$-output.  For the cases where $N_t<N_t^{\prime}$ (or $K<K^{\prime}$) , the redundant $N_t^{\prime}-N_t$ rows (or $K-K_0$ columns) of the channel matrix are filled with 0's. Similarly, the redundant $N_t^{\prime}-N_t$ elements of output are set as 0 when $N_t<N_t^{\prime}$. In each sample, we assume each $K\in\{1,2,\cdots,K^{\prime}\}$ is generated with the equal probability of $\frac{1}{K^{\prime}}$ and each $N_t\in \{1,2,\cdots, N^{\prime}_t\}$ is generated with the equal probability of  $\frac{1}{N^{\prime}_t}$. Therefore, the occurrence probabilities of different $K $  values are statistically equal among all samples and so are different  $N_t$ values.   It is suggested that the number of the samples in  the augmented training set for the large-scale model should be 5-10 times as many as that in the training set with fixed $N_t$ and $K$ values. {{However, this approach works only if the number of users or antennas does not exceed the maximum values used in the training set, otherwise re-training will be needed.}}


\section{Performance Evaluation}\label{Performance}
Both simulations and experiments are carried out to evaluate the performance  of the proposed DL enabled beamforming optimization. We assume that all channel entries undergo independent and identically distributed  Rayleigh flat-fading  with zero mean and unit variance {{unless otherwise specified,}} and perfect CSI is available at the BS. All transmit power is normalized by the noise power.

 {{ The training samples (dual variables) are generated by solving the problem \textbf{P2} using Knitro for its stability and efficiency, but can also be generated by solving the problem \textbf{P1}  using the bisection search method at the cost of more computational time during the offline training. }}
 In our simulation, we use 20000 training samples and 5000 testing samples, respectively.   All of proposed DL networks have one input layer, two CL layers, two BN layers,  three AC layers, one flatten layer, one FC layer, and one output layer. Besides, each CL layer has 8 kernels of size $3\times3$ and the first two AC layers adopt the ReLU function. {{Each CL  applies stride 1 and zero padding 1 such that the output  width and height of all CLs remain  the same as those of the input  \cite{CNN}. To be specific, the input size of the first CL is $2\times N_tK\times 1$ and the output size is $2\times N_tK\times 8$. Both the input size and output size of the second CL are $2\times N_tK\times 8$. When parameter sharing is considered, the numbers of parameters in the first and second CL are $3\times3\times1\times8=72$(weights)+8(bias)=80, and  $3\times3\times8\times8+ 8=584$, respectively, with a total of 664. When no  parameter sharing is considered, the   numbers of parameters in the two CLs are
$(2\times N_tK\times8)\times(3\times3\times1+1)=160N_tK$  and $(2\times N_tK\times8)\times(3\times3\times8+1)=1168N_tK$, respectively, with a total of $1328N_tK$. }} Adam optimizer \cite{ba2015adam} is used with the mean squared error based loss function.    We adopt the sigmoid function in the last AC layer.

 We will compare the performance and running time of the following  schemes when possible:
 \begin{enumerate}
    \item The optimal solution to solve P2 using Knitro.
    \item The proposed subgradient algorithm (Algorithm 2)  in Section III.
    \item The proposed  solution based on learned $\bm\lambda$ and $\bm\mu$.
    \item The proposed  solution based on learned $\bm\mu$ only.
    \item ZF Solution \cite{ZF-PA}.
     \begin{enumerate}
    \item
    When $N_t=K$,  pseudo inverse of the channel is the optimal beamforming direction, i.e.,
     \be
     \tilde \qW=\qH^\dag(\qH^T \qH^\dag)^{-1},
     \ee
     and the achievable SINR is $\Gamma_{ZF} = \min_n \frac{P_n}{ \|\qe_n^T  \tilde\qW\|^2}$.
     The overall optimal beamforming matrix is given by $\qW = \sqrt{\Gamma_{ZF}}\tilde \qW$.
    \item  However, when $N_t>K$, the optimal solution relies on solving the following SOCP problem \textbf{P7}, so the associated complexity is high:
        \bea
    \textbf{P7:}&& \max_{\qW, \Gamma} ~~\Gamma \\
    \mbox{s.t.}  &&    |\qh_{k}^T\qw_k|^2 \ge \Gamma, \forall k, \notag \\
     && \qh_{k}^T\qw_j=0, \forall k\ne j,\notag\\
     &&p_n= \|\qe_n^T \qW\|^2 \le P_n, ~~~\forall n. \notag
    \eea
    \item When $N_t<K$, there is no feasible ZF solution.
    \end{enumerate}

    \item   RZF  Solution \cite{RZF}. This is a low-complexity heuristic solution that improves the performance of ZF especially at the low SNR region. The  beamforming direction is given by:
     \be
     \tilde \qW=\qH^\dag(\qH^T \qH^\dag + \alpha \qI_{K\times K})^{-1},
     \ee
     where  $\alpha=\frac{K N_0}{\sum_{n=1}^{N_t} P_n}$     and the   overall  beamforming matrix is given by $\qW = \sqrt{ \min_n \frac{P_n}{ \|\qe_n^T  \tilde\qW\|^2}}\tilde \qW$.
 \end{enumerate}
 For fair comparison, the convergence of all iterative algorithms is achieved when the relative change of the objective function values is below $10^{-8}$.
  All algorithms are  implemented on an Intel i7-7700U CPU with 32 GB RAM using Matlab R2017b. One NVIDIA  Titan Xp GPU is used to train the neural network.

\subsection{Simulation Results}
 We first compare the SINR and running time results for a system with $N_t=K=4$ in Fig. \ref{SINR_versus_power_K4}. In Fig. \ref{SINR_versus_power_K4} (a), we can see that both the proposed subgradient solution and the  solution based on learned $\bm\mu$  can achieve close to optimal solution and outperform the RZF solution and the ZF solution especially at the low  signal to noise (SNR) regime. As the SNR increases, all solutions converge to the optimal solution.  Fig. \ref{SINR_versus_power_K4} (b) shows that both of the proposed learning based solutions can achieve more than an order of magnitude gain in terms of computational time when compared to the optimal algorithm.  The proposed subgradient algorithm is more efficient than the optimal solution using Knitro. ZF and RZF solutions have the lowest possible  complexity because there is no optimization involved.  {{In addition, we compare the robustness of various schemes against channel errors in Fig. \ref{Fig:error}. The channel vectors  are modelled as $\qh_k=\bar{\qh}_k +\sigma\mathbf{e}_k,\forall k$, where $\bar{\qh}_k$ is the imperfect channel estimate, $\mathbf{e}_k\sim\mathcal{CN}(\mathbf{0},\qI_N)$ is the channel error vector and $\sigma^2$ is the variance of  channel estimation error.  As expected, we can see that the channel estimation error causes degradation of the SINR performance for all the solutions. However, the results show that the proposed learning based solutions and the optimal solution are very robust, but the performance loss of the ZF and RZF beamforming is severe.}}

 Next we demonstrate the scalability of the algorithms when $N_t=K$ and the number of users varies from 2 to 10 when $P_n=10$ dB in Fig. \ref{SINR_versus_K}. As can be seen from Fig. \ref{SINR_versus_K} (a), as both the numbers of users and antennas   increase, the achievable SINR  first decreases and then increases.  The performance of the ZF and RZF solutions drops quickly.  As the number of users increases, both learning based solutions significantly outperform the ZF solution and the performance gap is enlarged while their gap to the optimal solution remains constant.   Fig. \ref{SINR_versus_K} (b) shows the complexity performance. The proposed algorithm that learns both $\bm\lambda$ and $\bm\mu$ has a lower complexity. As the number of users increases, e.g., when $K=10$, it can achieve nearly 50-fold gain in terms of computational time  when compared to the optimal algorithm.  The proposed algorithm that learns only   $\bm\mu$   achieves 0.5 dB higher SINR than that learns both $\bm\lambda$ and $\bm\mu$ at the cost of slightly increased time complexity.   {{Next we examine the SINR performance of the system using a  more realistic 3GPP
Spatial Channel Model (3GPP TR 25.996) \cite{SCM} as shown in Fig. \ref{SINR_large_scale}. We consider a scenario of urban micro cells and assume the distances between the BS and the users are between  50 m  and 300 m and distributed uniformly. The total system bandwidth is 20 MHz. Similar trends of the algorithms are observed in Fig. \ref{SINR_large_scale} as those in Fig. \ref{SINR_versus_K} (a), and both learning based solutions still significantly outperform the RZF and the ZF solutions.}}

 We then consider the performance of a system with $N_t=10$ transmit antennas at the BS, and vary the number of users $K$ when $P_n=10$ dB in Fig. \ref{SINR_Nt_10} (a). It is noticed that there is about 1 to 2 dB gap between the learned solutions and the optimal solution, while the ZF solution is almost optimal when $N_t>K$. However, from Fig. \ref{SINR_Nt_10} (b), we can see that the ZF solution has the highest complexity in this case because its solution needs to be optimized via solving the SOCP problem $\textbf{P7}$. The proposed algorithm that learns both $\bm\lambda$ and $\bm\mu$ achieves more than two orders of magnitude gain in terms of computational  complexity when compared to the ZF solution.

 Next we demonstrate the generalization property of our proposed algorithm that learns only $\bm\mu$. We train a  model with $N_t=K=10$ only once, and then use it when $N_t\le 10$ and $K\le 10$ vary.  As shown in Fig. \ref{generality}, it is observed the SINR performances of the optimal solution and the proposed generalization algorithm using the same model not only has the same trend with respect to the number of  users,  but also are   close to each other. More specifically, the achieved SINRs of the two schemes decrease with the increase of the user number when the number of BS antennas  is fixed.  Such observation validates the feasibility of the training set augment method and motivates further research on improving the generalization of the proposed DL-based algorithms. Besides, we find that adding   more antennas can improve the SINR performance because of the spatial gain.

 \begin{figure}
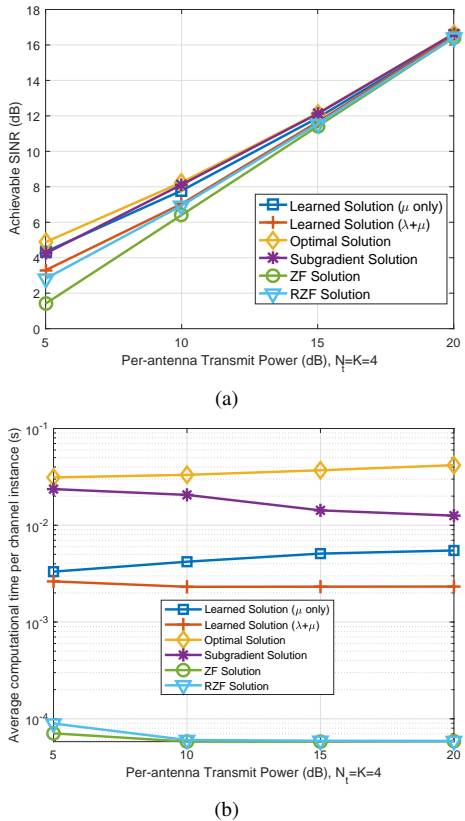

  \centering
  \subfigure[]{
    \label{snr_feasibility_versus_sinr} 
    \includegraphics[width=0.35\textwidth]{./fig/Result_SINR_Kt_Nt4_fast_PC_GPU}}
  \subfigure[]{
    \label{power_feasibility_versus_sinr} 
    \includegraphics[width=0.35\textwidth]{./fig/Result_time_K4_Nt4_CPU}} 
  \caption{The   performance and complexity of a system with $N_t=K=4$  averaged over 5000 samples: (a) minimum   SINR and (b) time consumption per channel realization.}
  \label{SINR_versus_power_K4}
\end{figure}

\begin{figure}[h]
  \centering
    \includegraphics[width=0.35\textwidth]{./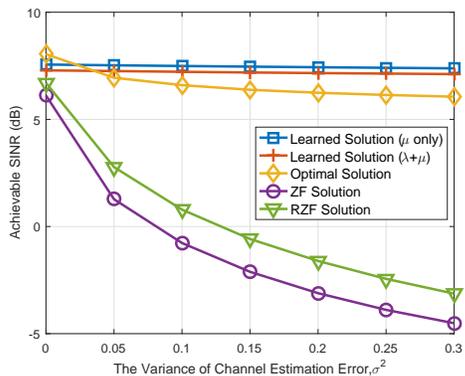}
  \caption{Effect of imperfect CSI on the performance of different schemes for a system with $N_t=K=4$ when $P_n=10$ dB.}
\label{Fig:error}
\end{figure}

 \begin{figure}
  \centering
  \subfigure[]{
    \includegraphics[width=0.35\textwidth]{./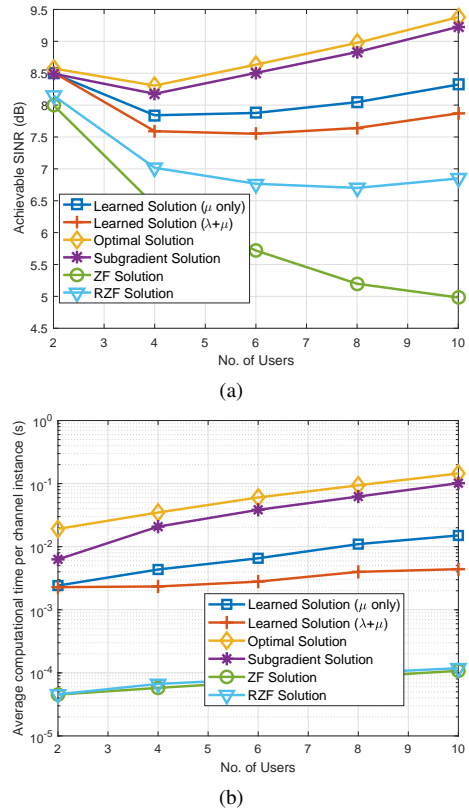}}
  \subfigure[]{
    \includegraphics[width=0.35\textwidth]{./fig/Result_time_K2_10_CPU}} 
  \caption{The performance and complexity of an $N_t=K$ system when $P_n=10$ dB,  averaged over 5000 samples: (a) minimum SINR and (b) time consumption per channel realization.}
  \label{SINR_versus_K}
\end{figure}

 \begin{figure}
  \centering
    \includegraphics[width=0.35\textwidth]{./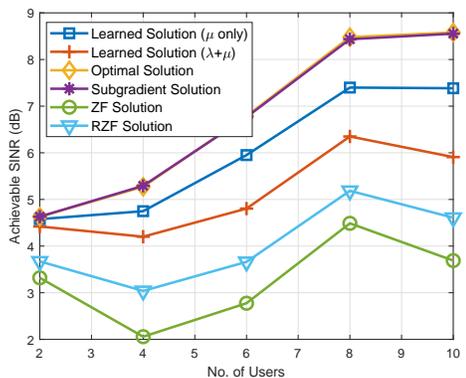}
   \caption{The SINR performance an $N_t=K$ system averaged over 5000 samples using the 3GPP
Spatial Channel Model in an urban micro cell environment when $P_n=30$ dBm.}
  \label{SINR_large_scale}
\end{figure}

  \begin{figure}
  \centering
  \subfigure[]{
    \label{snr_feasibility_versus_sinr} 
    \includegraphics[width=0.35\textwidth]{./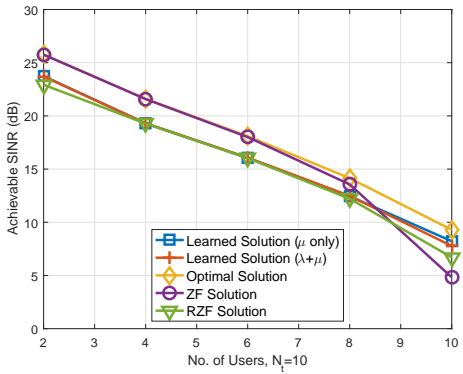}}
  \subfigure[]{
    \label{power_feasibility_versus_sinr} 
    \includegraphics[width=0.35\textwidth]{./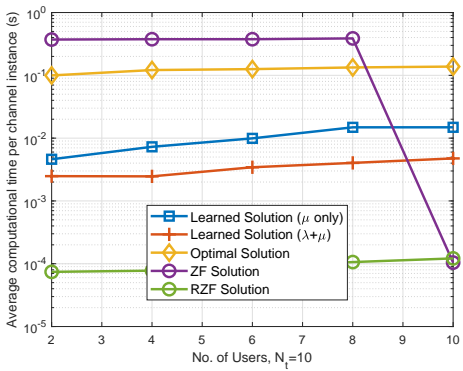}} 
  \caption{The   performance and complexity of a system with $N_t=10$ and varying $K$ when $P_n=10$ dB, over 5000 samples: (a) minimum SINR and (b) time consumption per channel realization.}
  \label{SINR_Nt_10}
\end{figure}

\begin{figure}[h]
\centering
\includegraphics[width=0.35\textwidth]{./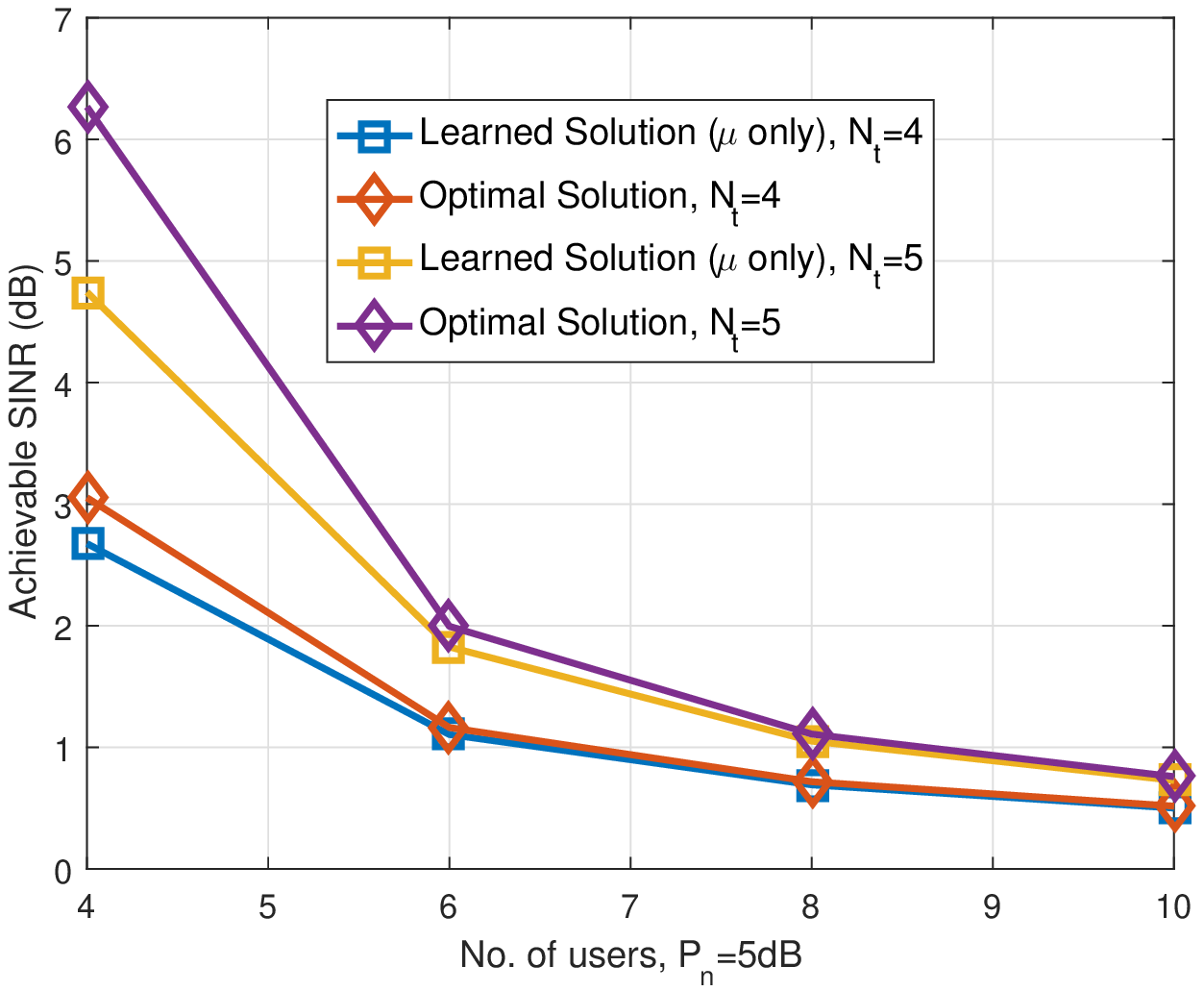}
\caption{The SINR performance with varying $K$ and $N_t$  using the same trained system under $N_t=K=10$. }
\label{generality}
\end{figure}

\subsection{Testbed Results}

To evaluate the proposed learning-based algorithm in a real-world scenario, we have implemented a multi-user beamforming testbed system based on SDR in our   lab environment.
\subsubsection{Testbed Setup}
The multi-user beamforming testbed system is based on the SDR structure, which consists of one PC hosting Matlab, a Gigabit Ethernet switch, four NI's USRP devices as transmitters or receivers and a CDA-2990 Clock Distribution Device. The USRP devices and the Clock Distribution Device for synchronization are illustrated in Fig. \ref{fig hardware}.

We adopt the SDR system since it provides a flexible development environment as well as a practical prototype. The USRP devices are exploited as the radio fronts in the SDR system, which can support different interfacing methods including PCIe and Gigabit Ethernet connections. Besides, the USRP devices can support a wide range of baseband signal processing platforms, including Matlab, Labview and GNU Radio.
The transmitters and receivers are implemented using USRP-2950 devices, which support  the Radio Frequency (RF) range from 50MHz to 2.2GHz \cite{NI_USRP}. For the evaluation purpose, the 900 MHz Industrial, Scientific and Medical (ISM) frequency band is used. The key parameters of the multi-user beamforming system are listed in Table \ref{table system_configuration}.

\begin{figure}[h]
\centering
\includegraphics[width=0.4\textwidth]{./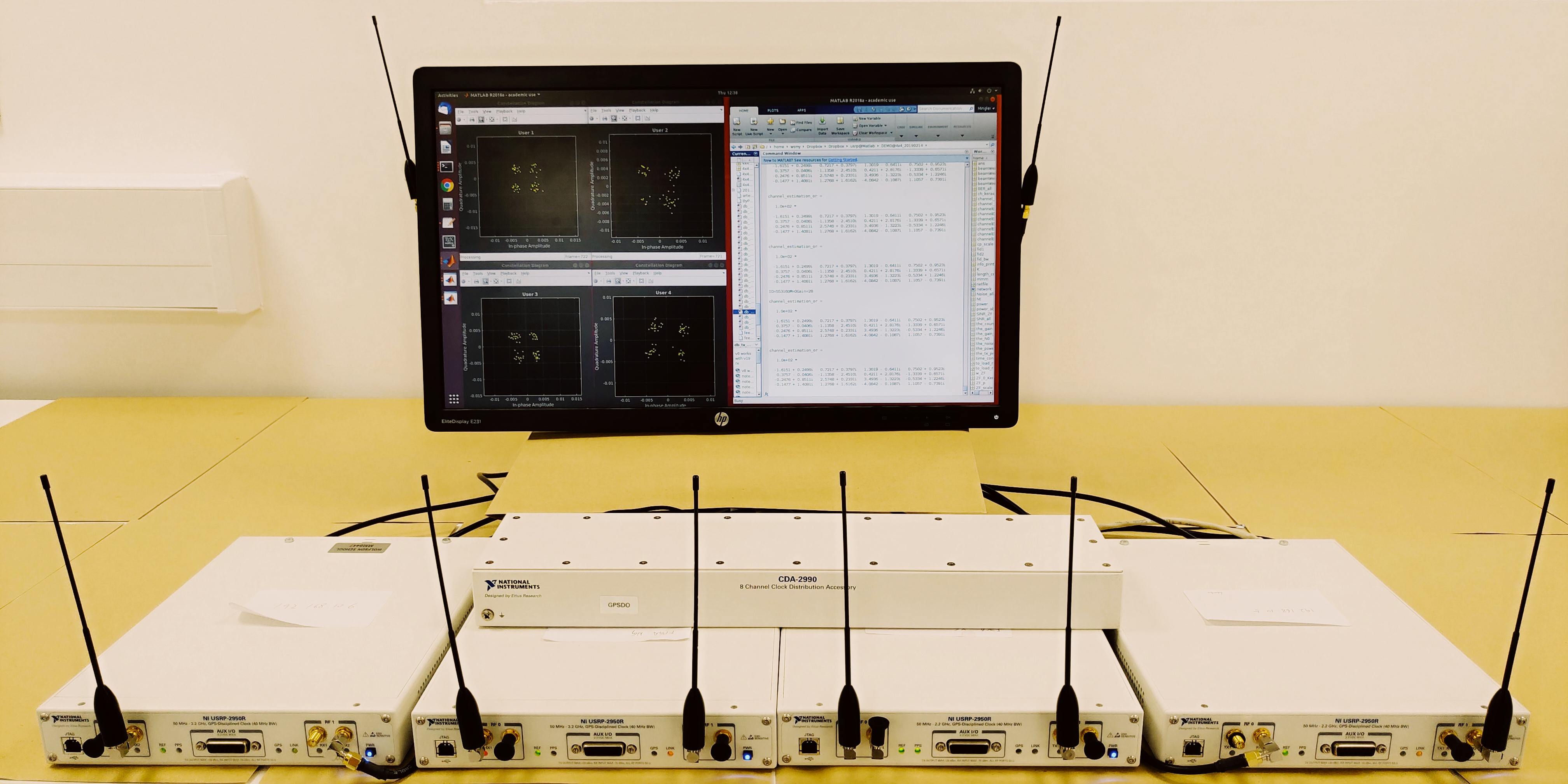}
\caption{The implemented multiuser beamforming testbed system, where two USRPs are combined to make a four-antenna transmitter and two USRPs are used to emulate four single-antenna users.}
\label{fig hardware}
\end{figure}
In the experiment, we consider the scenario consisting of one BS with four transmit antennas and four single-antenna users, i.e., $N_t=K=4$.
We combine  two USRP-2950 devices as a cooperative four-antenna transmitter and employ  two USRP-2950 devices as four individual single-antenna users. All channels on the USRP devices are synchronized using the CDA-2990 Clock Distribution Device. The omnidirectional tri-band SMA-703 antennas are used for both the transmitters and the receivers, while the receiver antennas are extended using RF cables.
Specifically, both static and dynamic channel conditions are examined to evaluate the proposed learning-based beamforming algorithms. For the static channel scenario, the transmitter antennas are placed next to each other with a space of 0.1 m, while the receiver antennas are placed 1.5 m away from the transmitter antennas as well as from each other. For the dynamic scenario, a low-mobility scenario is simulated, where one of the receiving antennas is moving at the speed of 0.6 m/s. Besides, the experiment also exploit different transmitter powers to evaluate the algorithms' performance  in different SNR configurations, where 0 dB of transmit power gain corresponds to a transmit power of $-70$ dBm.
Since the multi-user beamforming system coordinates several USRP devices as transmitters and receivers at the same time, a Gigabit Ethernet switch is used to enable multiple USRP interfacing.

\begin{table}[!h]
\centering
\caption{Testbed system configuration}
\label{table system_configuration}
\begin{tabular}{|l|l|}
\hline
\textbf{Parameters}  & \textbf{Descriptions}                                                                                                                \\ \hline
Clock and PPS source & CDA-2990 10MHz and 1PPS                                                                                                              \\ \hline
Radio Front          & USRP 2950, 50MHz-2.2GHz                                                                                                              \\ \hline
Antennas             & Tri-band SMA-703                                                                                                                     \\ \hline
Modulation           & QPSK                                                                                                                                 \\ \hline
Prefix               & Gold code of length 127                                                                                                              \\ \hline
Baseband Sample Rate & 40 kilosample/second (ksps)                                                                                                                              \\ \hline
Pulse Shaping        & \begin{tabular}[c]{@{}l@{}}Raised Cosine Filter of squared root\\ shape with rolloff factor 0.8 and \\ decimation factor 8\end{tabular} \\ \hline
Channel Estimation   & MMSE Estimator\\ \hline
\end{tabular}
\end{table}


The baseband signal processing modules and the proposed learning-based beamforming algorithms are implemented as Matlab function scripts on a PC with
1 Intel i7-4790 CPU Core, and RAM of 32GB. In the experiment, all users are sharing the same channel and they all use the Quadrature Phase Shift Keying (QPSK) modulation.
The payloads are prefixed with different Gold sequences for each user, which are exploited for both synchronization and channel estimation. Besides, all baseband signals are shaped using a Raised Cosine Filter. During the experiment, each user decodes its own payload and provides channel estimation as feedback to the transmitter. The transmitters and receivers are controlled using different Matlab sessions, while the channel estimation information is exchanged locally on the   PC's cache storage. The beamforming algorithms optimize the beam weight vectors using the aggregated channel estimation information. The transmitter applies the optimized beam weight vectors to generate the signals for each antenna before transmission.

\subsubsection{Experiment Results and Discussions}
To demonstrate the performance of the proposed learning algorithm (based on learned $\bm\lambda$ and $\bm\mu$), three benchmark algorithms are implemented on the multi-user beamforming system, which are the  theoretically  optimal solution, the ZF solution and the RZF solution. Each algorithm is evaluated  under both static and dynamic conditions, and we choose bit error rate (BER) as the performance metric.  In order to generate the BER performance of each solution, a real-time experiment is conducted using the testbed illustrated in Fig. \ref{fig hardware} with different transmitter power. For each transmit power, the BS sends $10^4$ packets each containing 256 QPSK symbols  and the BER is calculated based on the averaged bit error of all packets.

Fig. \ref{experiment} depicts the BER results   in the static and dynamic channel conditions as the transmit power gain varies.
Under the static condition as shown in Fig. \ref{experiment} (a), the proposed learning-based algorithm outperforms the ZF solution and RZF solution across the considered transmit power range. Specifically, the BER performance gain of the learning based algorithm is approximately 4 dB over the ZF solution and 3 dB over the RZF solution in the relatively low transmit SNR regime, and this performance gain reduces as the transmit SNR grows. Compared to the  theoretically  optimal solution, the learning-based algorithm has a close performance in the low transmit SNR regime but becomes inferior for high transmit SNR conditions. This is expected  because under static channel conditions, there is sufficient time to implement the  theoretically  optimal algorithm, therefore it achieves the best performance.
\begin{table*}[!htbp]
\centering
\caption{Typical Time Performance in the   Experiment Scenario}
\label{table Result_Expe_4x4_TimePerformance}
\begin{tabular}{|l|c|c|c|c|c|}
\hline
\textbf{Processing/Solution}                                  & CSI Feedback Period & Learned Solution & Theoretically Optimal Solution & ZF Solution & RZF Solution \\ \hline
\multicolumn{1}{|c|}{\textbf{Typical Time (second)}} & 2x$10^{-2}$                & 5x$10^{-3}$             & 8x$10^{-2}$             & 2x$10^{-4}$        & 2x$10^{-4}$         \\ \hline
\end{tabular}
\end{table*}
However, the algorithms show difference BER performance under the dynamic channel conditions, as depicted in Fig. \ref{experiment} (b). The learning-based algorithm outperforms all benchmark algorithms in the relatively medium to high SNR ranges,  which  corresponds to 0 to 12 dB  in Fig. \ref{experiment} (b). It is worth noting that the learning-based algorithm is superior to the alleged  theoretically  optimal solution under dynamic channel conditions and in particular, the maximum achieved BER performance gain is approximately 1 dB over the  theoretically  optimal solution. This result is expected, and can be explained as follows. The beamforming algorithms require up to date CSI for optimization, but the computational delay of the  theoretically  optimal solution is considerably long, and by the time the  solution is found, the channel would have changed. In other words, the  theoretically optimal beamforming solution is optimized only based on the outdated CSI, and therefore the mismatch  leads to performance degradation, and the  theoretically optimal performance can no longer be guaranteed.
This can be verified by the typical   time-consumption performance for the considered algorithms as illustrated in Table \ref{table Result_Expe_4x4_TimePerformance}.
 This performance degradation becomes worse when the channel conditions are dynamic than that  in the static channel conditions as shown by Fig. \ref{experiment}(a) and Fig. \ref{experiment}(b).
 It is seen from Table \ref{table Result_Expe_4x4_TimePerformance}  that the ZF and RZF solutions require  much less computational time when optimizing the beamforming weights, so the performance of the ZF solution is close to that of the  theoretically  optimal solution (degraded by operating on outdated CSI) in the experiment, and the RZF solution even outperforms the optimal solution.  However, the BER performance of the ZF and RZF solutions is still inferior to that of the proposed learning-based algorithm. It is worth noticing that under both the static and dynamic channel conditions, the precise channel models are not known, so in the experiment, we resort to  the trained neural network based on the small-scale fading for online learning of the beamforming solution. The results in Fig. \ref{experiment} show that the trained network for one channel model generalizes well to cope with different channel conditions and this will greatly reduce the need to re-train the neural network.

   \begin{figure}[]
  \centering
  \subfigure[]{
      \includegraphics[width=0.4\textwidth]{./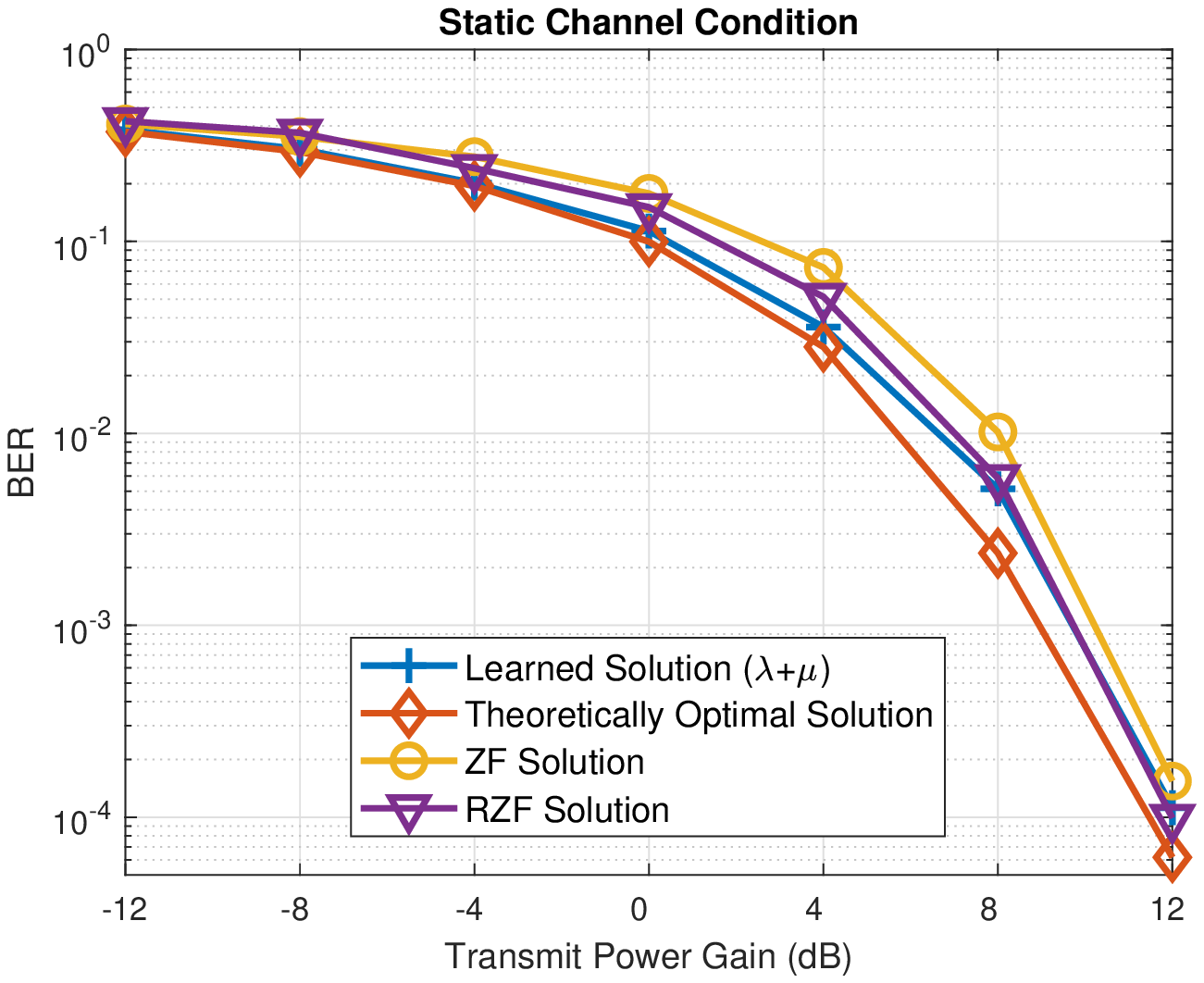}
     }
  \subfigure[]{
     \includegraphics[width=0.4\textwidth]{./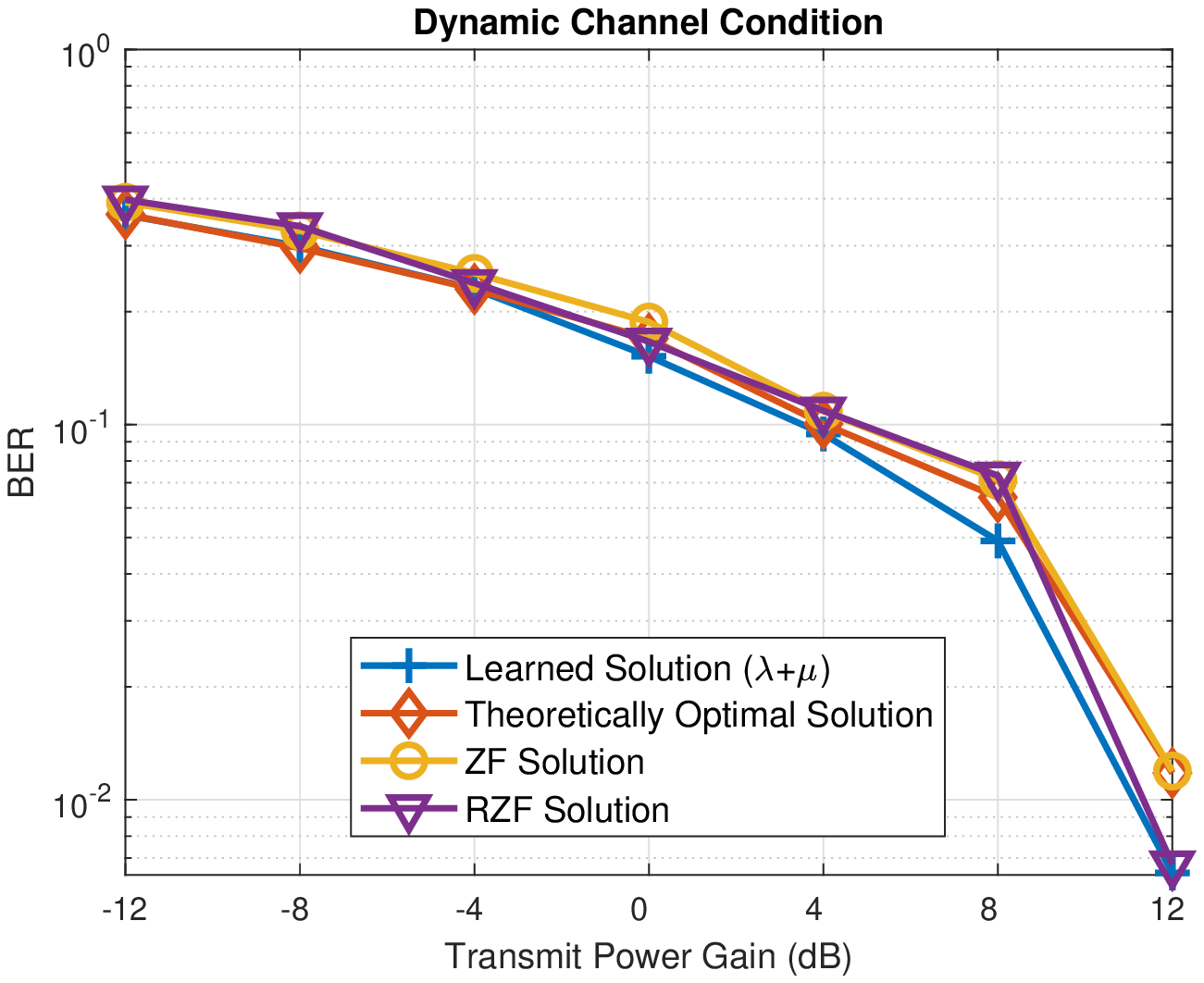}
    }
  \caption{BER performance of the testbed experiments for 4 users, 4 BS antennas scenario: (a) static channel condition, and (b) dynamic channel condition.}
  \label{experiment}
\end{figure}

\section{Conclusions and Future Directions}\label{conclusions}
In this paper, we have developed deep learning enabled solutions for fast optimization of downlink beamforming under the per-antenna power constraints. Our solutions are both model driven and data driven, and are achieved by exploiting the structure of the beamforming problem, learning the dual variables from labelled data and then recovering the original beamforming solutions. Our solutions can naturally adapt to the  varying number of active users in dynamic environments without re-training thus making it more general. The simulation results have shown the superior performance-complexity tradeoff achieved by the proposed solutions, and the results have been further verified by the testbed experiments using software defined radio.

 We would like to point out a few promising future directions.   This paper assumes that perfect CSI is available; however in practice, CSI estimation is never perfect. One future direction would be to investigate a more advanced robust learning framework to mitigate   channel estimation errors or other types of impairments. As a step further, another promising future direction will be to study how to use deep learning to map directly from the pilot signals to the beamformed signals, bypassing the explicit channel estimation step.

 In order to reduce computational complexity of the training process when the channel conditions change, one possible method is to use a wide range of channel realizations during the off-line training phase, in order that the neural network can learn to generalize from a wider range of channel variations. Another  approach is to employ transfer learning  \cite{shen2018transfer}.  The main idea is that knowledge learned from one training task for a given channel condition may be transferred to a similar training task for a different channel condition, and can help train a new model with  additional examples, which is worthy of further study.

\section*{Appendix A. Proof of the convergence and optimality of  Algorithm 1 to Solve \textbf{P4}}

 The proof has two parts. The first part is devoted to the proof of convergence and the second  part addresses the uniqueness and optimality of the fixed point after convergence.

Let us start with $\gamma^{(j)}~(j\ge 1)$ which is achievable for the power vector $\boldsymbol{\lambda}^{(j)}$. It is easily seen    that given $\gamma^{(j)}$, $\mathcal{I} (\bm\lambda^{(j)})$ (user index $k$ is omitted for convenience) is a standard interference function, which satisfies the following properties \cite{Yates95}\cite{Yates98}:
\begin{itemize}
\item[(P1)] $\boldsymbol{\lambda}^{(j)}$ is component-wise monotonically decreasing;
\item[(P2)] If $\boldsymbol{\lambda}\ge \boldsymbol{\lambda}^{'}$, then $\mathcal{I}(\boldsymbol{\lambda})\ge\mathcal{I}(\boldsymbol{\lambda}^{'})$;
\item[(P3)] $\boldsymbol{\lambda}^{(j)}$, for all $j$, are all feasible solutions given the SINR constraint $\gamma^{(j)}$.
\end{itemize}

Assume that at the $j$-th iteration, the dual variable is $\bm\lambda^{j}$ and the achievable SINR is $\gamma^{(j)}$. Then at the $(j+1)$-th iteration, according to (P1), $\bar{\lambda}_k^{(j+1)}\le \lambda_k^{(j)}~\forall k$, and as such $\eta\ge 1$ and $\bar{\lambda}_k^{(j+1)}\le \lambda_k^{(j+1)}~\forall k$ in Step 4). According to P2, in Step 5)   we have the SINR result ${\gamma}_k(\boldsymbol{\lambda}^{(j+1)})>{\gamma}_k(\bar{\boldsymbol{\lambda}}^{(j+1)})$. Then, according to (P3), $ {\gamma}_k
(\bar{\boldsymbol{\lambda}}^{(j+1)})\ge\gamma^{(j)}~\forall k$, and therefore $\gamma^{(j+1)}=\min_k{\gamma}_k (\boldsymbol{\lambda}^{(j+1)})\ge\min_k {\gamma}_k(\bar{\boldsymbol{\lambda}}^{(j+1)})\ge\gamma^{(i)}$, i.e., the balanced SINR $\gamma^{(j)}$ is  increasing as the iteration goes. Since $\gamma^{(j)}$ is upper bounded, the algorithm converges to a fixed point $\boldsymbol{\lambda}^{(\infty)}$. Next, we prove that the fixed point is also optimal.

We see that $\lambda_k^{(\infty)}$ satisfies the following fixed-point equation:
\begin{equation}
\lambda_k^{(\infty)} = \gamma^{(\infty)}  \bar{I}_k(\boldsymbol{\lambda}^{(\infty)})~\forall k.
\end{equation}
and it satisfies the total virtual uplink power is $\sum_k \lambda_k^{(\infty)} = \frac{1}{N_0}$. Clearly, the total uplink transmit power is a monotonic non-decreasing function of the SINR constraint. This implies that there is no solution $\boldsymbol{\lambda}^*$  which provides a strictly higher SINR $\gamma^*>\gamma^{(\infty)}$ but still maintains the power constraint $\sum_k \lambda_k^{(\infty)} = \frac{1}{N_0}$.
\endproof

\section*{Appendix B. Proof that $f(\bm\mu)$ of \textbf{P4}  is a concave function  in  $\bm \mu$.}
{
\begin{proof}
First note that Algorithm 1 to solve \textbf{P4} belongs to a  fixed-point iteration, which means a solution $\{\Gamma, \lambda\}$ that satisfies the first two constraints \eqref{eqn:P41} and \eqref{eqn:P42} with equality ensuring an optimal solution. This indicates there is no local optimum, and the gap between \textbf{P4} and its dual problem is zero. Then it suffices to prove that the objective function of the dual problem of \textbf{P4}  is concave in $\bm\mu$.

By using (11) of [23], we can rewrite \textbf{P4} as
\bea
     \textbf{P4':}&&  f(\bm\mu)=\max_{\Gamma, \bm \lambda} ~~ \Gamma \notag\\
    \mbox{s.t.} && \sum_{i=1}^K \lambda_i \qh_i^*\qh_i^T +\mbox{Diag}(\bm \mu) -\left(1+\frac{1}{\Gamma}\right)\lambda_k \qh_k^* \qh_k^T\succeq \qzero, \forall k,\notag \\
                    && \sum_{k=1}^K \lambda_k N_0 = 1,\notag\\
                && \bm \lambda\ge \qzero.
\eea

Its Lagrangian   is
\bea
   && L_{\bm \mu}(\Gamma, \bm\lambda, a, \qb, \{\qC_k\}) =   \Gamma +a\left( \sum_{k=1}^K \lambda_k N_0 -1\right) + \qb^T \bm\lambda + \\
   && \sum_{k=1}^K \tr\left( \left( \sum_{i=1}^K \lambda_i \qh_i^*\qh_i^T +\mbox{Diag}(\bm \mu) -\left(1+\frac{1}{\Gamma}\right)\lambda_k \qh_k^* \qh_k^T \right) \qC_k)\right),\notag
\eea
where $a, \qb, \{\qC_k\}$ are dual variables. Note that it is derived based on the maximization   rather than the commonly used minimization of an objective function .

The dual objective function  is expressed as $G_{\bm \mu}(a, \qb, \{\qC_k\}) =   \min_{\Gamma, \bm\lambda} L_{\bm \mu}(\Gamma, \bm\lambda, a, \qb, \{\qC_k\})$ which is to be minimized over $(a, \qb, \{\qC_k\})$ and  only contains a linear term of $\sum_{k=1}^K \tr\left(\mbox{Diag}(\bm \mu) \qC_k\right)$ about $\bm\mu$, and the constraints of the dual problem (although not derived here) do not involve $\bm\mu$. Therefore the dual objective function $\min G_{\bm \mu}(a, \qb, \{\qC_k\})$ is a  point-wise minimum of a family of affine functions about $\bm\mu$ and as a result concave \cite[Sec.3.2.2]{convex}, so is $f(\bm\mu)$. This completes the proof.
\end{proof}
}

\section*{Appendix C. To find the subgradient Euclidean projection in  Algorithm 2}
 The Euclidean projection is needed when the  update of $\bm\mu$ based on the subgradient in \textbf{Algorithm 2} does not fall into the feasible set $\mathcal{S}$. It needs to solve the following optimization problem:
 \be
\textbf{P5:}~~\min_{\bm \nu} \|\bm\nu -\bm\mu\|^2 ~~ \mbox{s.t.} \sum_{n=1}^{N_t}   \nu_n P_n=1, \bm\nu\ge 0,
 \ee
 where $\bm\mu =\bm\mu^{(i)}_k + \alpha_i \mbox{Diag}\{\|\qe_n^T \qW\|^2\}$. Although \textbf{P5} is a convex problem and can be solved by a standard numerical algorithm, below we derive its analytical property and  propose a more efficient bisection algorithm to solve it.

 Its   Lagrangian can be expressed as
 \bea
    L = \|\bm\nu -\bm\mu\|^2 + x (\sum \nu_n P_n-1) - \sum_{n} y_n \nu_n,
 \eea
  where $x$ and $y_n\ge 0$ are dual variables.

 Setting its first-order derivative to be zero leads to
 \be
    \nu_n = \frac{2\mu_n + y_n  - x P_n}{2} = \max\left(\frac{2\mu_n  - x P_n}{2}, 0\right).
 \ee

 Substitute it to $\sum_{n=1}^{N_t}   \sum \nu_n P_n=1$ and we get
 \be\label{eqn:bisection:x}
    \sum_{n=1}^{N_t}   \max\left(\frac{2\mu_n  - x P_n}{2}, 0\right) P_n =1.
 \ee
 Therefore the remaining task is to find $x$ that satisfies \eqref{eqn:bisection:x}. Obviously the left hand side of \eqref{eqn:bisection:x} is monotonic in $x$, so we propose the following bisection method to find the optimal $x$.

 \textbf{\underline{ Algorithm 3 to Solve \textbf{P5}:}}
 \begin{enumerate}
    \item Set the upper and lower bounds of $x$ as $x^U$ and $x^L$. Repeat the following steps until convergence.

    \item Calculate $x^t= \frac{x^U+x^L}{2}$.

    \item If $\sum_{n=1}^{N_t}   \max\left(\frac{2\mu_n  - x^t P_n}{2}, 0\right) P_n >1$, $x^L = x^t$; otherwise $x^U=x^t$.

    \end{enumerate}

\end{document}